\theoremstyle{plain}
\newtheorem{theorem}{Theorem}
\newtheorem{fact}[theorem]{Fact}
\theoremstyle{definition}
\newcommand{\A}{A}
\newcommand{\ADV}{\textsc{Adv}}
\newcommand{\OPT}{\textsc{Opt}}
\newcommand{\ALG}{\textsc{Alg}}
\newcommand{\oldMIX}{\textsc{RMix}}
\newcommand{\newMIX}{\textsc{Mix-R}}
\newcommand{\GALG}{\mathcal{G}_\mathrm{\ALG}}
\newcommand{\GGADV}{\mathcal{G}_\mathrm{\ADV}}
\newcommand{\GADV}[1]{\mathcal{G}_\mathrm{\ADV}^{({#1})}}
\newcommand{\GRMIX}{\mathcal{G}_{\newMIX}}
\newcommand{\GOPT}{\mathcal{G}_\mathrm{\OPT}}
\newcommand{\R}{\mathcal{R}}
\newcommand{\E}{\mathbb{E}}
\newcommand{\braced}[1]{{\left\{#1\right\}}}
\newcommand{\emdash}{\hspace{1pt}---\hspace{1pt}}
\newcommand{\mytable}[3]{
\begin{table}[t]
\centering
#3
\caption{#2}
\label{tbl:#1}
\end{table}
}
\newcommand{\kw}[1]{{\textbf{#1}}}
\title{One to Rule Them All:\\
	A~General Randomized Algorithm for Buffer Management with Bounded Delay}
\author{\L{}ukasz Je\.z\thanks{University of Wroc\l{}aw.
	Work supported by MNiSW grant N~N206~490638, 2010--2011.}}
\begin{document}

\maketitle

\begin{abstract}
We give a~memoryless scale-invariant randomized algorithm {\newMIX} for
\emph{buffer management with bounded delay} that is $e/(e-1)$-competitive
against an adaptive adversary, together with better performance guarantees
for many restricted variants, including the $s$-bounded instances.
In particular, {\newMIX} attains the optimum competitive ratio of $4/3$ on
$2$-bounded instances.

Both {\newMIX} and its analysis are applicable to a~more general problem,
called \emph{Collecting Items}, in which only the relative order between packets'
deadlines is known. {\newMIX} is the optimal \emph{memoryless} randomized
algorithm against adaptive adversary for that problem in a~strong sense.

While some of provided upper bounds were already known, in general,
they were attained by several different algorithms.
\end{abstract}

\section{Introduction}

In this paper, we consider the problem of \emph{buffer management with bounded
delay}, introduced by Kesselman et al.~\cite{diffserv-kesselman}.  This problem
models the behavior of a~single network switch responsible for scheduling packet
transmissions along an outgoing link as follows. We assume that time is
divided into unit-length steps. At the beginning of a~time step, any number of
packets may arrive at a~switch and be stored in its \emph{buffer}.  Each packet
has a~positive weight, corresponding to the packets priority, and a~deadline,
which specifies the latest time when the packet can be transmitted. Only one packet
from the buffer can be transmitted in a~single step. A~packet is removed from the
buffer upon transmission or expiration, i.e., reaching its deadline.
The goal is to maximize the \emph{gain}, defined as the total weight of the packets
transmitted.

We note that \emph{buffer management with bounded delay} is equivalent to
a~scheduling problem in which packets are represented as
jobs of unit length, with given
release times, deadlines and weights; release times and deadlines are restricted
to integer values.  In this setting, the goal is to maximize the total weight
of jobs completed before their respective deadlines.

As the process of managing packet queue is inherently a~real-time task, we
model it as an \emph{online problem}.  This means that the algorithm, when deciding
which packets to transmit, has to base its decision solely on the packets which
have already arrived at a~switch, without the knowledge of the future.

%\todo{Give Greedy as an example, and analyze it (as intro to competitive analysis)?}

%%%%%%%%%%%%%%%%%%%%%%

\subsection{Competitive Analysis}\label{sec: comp-anal}
To measure the performance of an online algorithm, we use the standard notion
of \emph{competitive analysis}~\cite{borodin-book}, which, roughly speaking, compares
the gain of the algorithm to the gain of the \emph{optimal solution} on the same instance.
For any algorithm $\ALG$, we denote its gain on instance $I$ by
$\GALG(I)$. The optimal offline algorithm is denoted by $\OPT$. We say that
a~deterministic algorithm $\ALG$ is $\R$-competitive if on any
instance~$I$ it holds that $\GALG(I) \geq \frac 1 \R \cdot \GOPT(I)$. 

When analyzing the performance of an online algorithm $\ALG$, we view the
process as a~game between $\ALG$ and an {\em adversary}. The adversary controls
what packets are injected into the buffer and chooses which of them to send.
The goal is then to show that the adversary's gain is at most $\R$ times $\ALG$'s gain.

If the algorithm is randomized, we consider its expected gain, $\E[\GALG(I)]$,
where the expectation is taken over all possible random choices made by $\ALG$.
However, in the randomized case, the power of the adversary has to be further
specified.  Following Ben-David et al.~\cite{relating-adversaries}, we
distinguish between an \emph{oblivious} and an \emph{adaptive-online} adversary,
which from now on we will call \emph{adaptive}, for short.
An oblivious adversary has to construct the whole
instance in advance. This instance may depend on $\ALG$ but not on
the random bits used by $\ALG$ during the computation. 
The expected gain of $\ALG$ is compared to the gain of the optimal
offline solution on $I$.  In contrast, in case of an adaptive adversary, the
choice of packets to be injected into the buffer may depend on the algorithm's 
behavior up to the given time step. This adversary must also provide
an answering entity~$\ADV$, which
creates a~solution in parallel to $\ALG$. This solution may not be changed
afterwards.  We say that $\ALG$ is $\R$-competitive against an adaptive
adversary if for any adaptively created instance $I$ and any answering
algorithm $\ADV$, it holds that $\E[\GALG(I)] \geq \frac{1}{\R} \cdot
\E[\GGADV(I)]$. We note that $\ADV$ is (wlog) deterministic,
but as $\ALG$ is randomized, so is the instance~$I$.

In the literature on online algorithms~\cite{borodin-book}, the
definition of the competitive ratio sometimes allows an additive constant,
i.e.,~a~deterministic algorithm is then called $\R$-competitive if there exists
a~constant $\alpha~\geq 0$ such that for any instance~$I$ it holds that
$\GALG(I) \geq \frac 1 \R \cdot \GOPT(I) - \alpha$.  An~analogous definition
applies to the randomized case.  For our algorithm {\newMIX} the bound holds for
$\alpha~= 0$, which is the best possible.

%%%%%%%%%%%%%%%%%%%%%%%%%%%%%%%%%%%%%%%%%%%%%%%%%%%%%%%%%%%%%%%%%%%%%%%%%%%%%%%%

\subsection{Basic Definitions}\label{sec: prel}

We denote a~packet with \emph{weight} $w$ and \emph{relative deadline} $d$ by
$(w,d)$, where the relative deadline of a~packet is, at any time, the number
of steps after which it expires. The packet's \emph{absolute deadline},
on the other hand, is the exact point in time at which the packet expires.
a~packet that is in the buffer, i.e., has already been released and has neither
expired nor been transmitted by an algorithm, is called \emph{pending} for the
algorithm. The \emph{lifespan} of a~packet is its relative deadline value upon
injection, or in other words the difference between its absolute deadline and
release time.

The goal is to maximize the weighted throughput, i.e., the total weight of
transmitted packets.
We assume that time is slotted in the following way. We distinguish between
points in time and time intervals, called \emph{steps}. In step $t$,
corresponding to the interval $(t,t+1)$, $\ADV$ and the algorithm choose,
independently, a~packet from their buffers and transmit it. The packet transmitted
by the algorithm ({\ADV}) is immediately removed from the buffer and no longer
pending. Afterwards, at time $t+1$, the relative deadlines of all remaining
packets are decremented by $1$, and the packets whose relative deadlines reach
$0$ expire and are removed from both {\ADV}'s and the algorithm's buffers.
Next, the adversary injects any set of packets. At this point, we proceed to step
$t+1$.

To no surprise, all known algorithms are \emph{scale-invariant}, which means
that they make the same decisions if all the weights of packets in an 
instance are scaled by a~positive constant.  a~class of further restricted
algorithms is of special interest for their simplicity.  An algorithm is
\emph{memoryless} if in every step its decision depends only on the set
of packets pending at that step. An algorithm that is both memoryless and
scale-invariant is called \emph{memoryless scale-invariant}.

%%%%%%%%%%%%%%%%%%%%%%%%%%%%%%%%%%%%%%%%%%%%%%%%%%%%%%%%%%%%%%%%%%%%%%%%%%%%%%%%%%%%
%%%%%%%%%%%%%%%%%%%%%%%%%%%%%%%%%%%%%%%%%%%%%%%%%%%%%%%%%%%%%%%%%%%%%%%%%%%%%%%%%%%%

\subsection{Previous and Related Work, Restricted Variants}

The currently best, $1.828$-competitive, deterministic algorithm for general
instances was given by Englert and Westermann~\cite{bounded-delay-westermann}.
Their algorithm is scale-invariant, but it is \emph{not} memoryless. However,
in the same article Englert and Westermann provide another, $1.893$-competitive,
deterministic algorithm that is memoryless scale-invariant.
The best known randomized algorithm is the $1.582$-competitive memoryless
scale-invariant {\oldMIX}, proposed by~Chin~et~al.~\cite{bounded-delay-chin-journal}.
For reasons explained in Section~\ref{sec: analysis-intro} the original analysis
by Chin~et~al. is only applicable in the oblivious adversary model.
However, a~refined analysis shows that the algorithm remains $1.582$-competitive
in the adaptive adversary model~\cite{rmix-arxiv}.

Consider a~(memoryless scale-invariant) greedy algorithm that always transmits
the heaviest pending packet. It is not hard to observe that it is $2$-competitive,
and actually no better than that. But for a~few years no better deterministic
algorithm for the general case was known. This naturally led to a~study of many
restricted variants. Below we present some of them, together with known results.
The most relevant bounds known are summarized in Table~\ref{tbl:results}. Note
that the majority of algorithms are memoryless scale-invariant.

For a~general overview of techniques and results on buffer management,
see the surveys by Azar~\cite{packet-scheduling-azar-survey},
Epstein and Van Stee~\cite{EpsSte04A}
and Goldwasser~\cite{packet-scheduling-goldwasser-survey}.

\paragraph{Uniform Sequences}

An instance is \emph{$s$-uniform} if the lifespan of each packet is exactly~$s$.
Such instances have been considered for two reasons. Firstly, there is a~certain
connection between them and the \emph{FIFO model} of buffer management, also
considered by Kesselmann~et~al.~\cite{diffserv-kesselman}. Secondly, the
$2$-uniform instances are among the most elementary restrictions that do not
make the problem trivial. However, analyzing these sequences is not easy:
while a~simple deterministic $1.414$-competitive algorithm for $2$-uniform
instances~\cite{bounded-delay-anzhu} is known to be optimal among memoryless
scale-invariant algorithms~\cite{bounded-delay-chin-journal}, for unrestricted
algorithms a~sophisticated analysis shows the optimum competitive ratio is
$1.377$~\cite{bounded-delay-chrobak-improved}.

\paragraph{Bounded Sequences}

An instance is \emph{$s$-bounded} if the lifespan of each packet is at most~$s$;
therefore every $s$-uniform instances is also $s$-bounded. This class of instances
is important, because the strongest lower bounds on the competitive ratio known
for the problem employ $2$-bounded instances. These are $\phi\approx 1.618$ for
deterministic algorithms~\cite{diffserv-np-single,chin-timesharing,hajek-scheduling},
$1.25$ for randomized algorithms in the oblivious adversary model~\cite{chin-timesharing},
and $4/3$ in the adaptive adversary model~\cite{bounded-delay-2-bounded}.
For $2$-bounded instances algorithms matching these bounds are
known~\cite{diffserv-kesselman,bounded-delay-chin-journal,bounded-delay-2-bounded}.
A~$\phi$-competitive deterministic algorithm is also known for $3$-bounded
instances~\cite{bounded-delay-chin-journal}, but in general the best
algorithms for $s$-bounded instances are only known to be
$2-2/s+o(1/s)$-competitive~\cite{bounded-delay-chin-journal}.

\paragraph{Similarly Ordered Sequences}

An instance is \emph{similarly ordered} or has \emph{agreeable deadlines} if
for every two packets $i$ and $j$ their spanning intervals are not properly
contained in one another, i.e., if $r_i < r_j$ implies $d_i \leq d_j$. Note
that every $2$-bounded instance is similarly ordered, as is every $s$-uniform
instance, for any $s$. An optimal deterministic $\phi$-competitive
algorithm~\cite{bounded-delay-li} and a~randomized $4/3$-competitive algorithm
for the oblivious adversary model~\cite{agreeable-moje} are known.
With the exception of $3$-bounded instances, this is the most general class
of instances for which a~$\phi$-competitive deterministic algorithm is known.

\paragraph{Other restrictions}

Among other possible restrictions, let us mention one for which our algorithm
provides some bounds. Motivated by certain transmission protocols, which usually
specify only several priorities for packets, one might bound the number of
different packet weights. In fact, Kesselmann~et~al. considered deterministic
algorithms for instances with only two distinct packet
weights~\cite{diffserv-kesselman}.

\paragraph{Generalization: \emph{Collecting Weighted Items from a~Dynamic Queue}}
\label{sec: items}

Bienkowski~et~al.~\cite{item-collecting} studied
a~generalization of buffer management with bounded delay, in which the algorithm
knows only the relative order between packets' deadlines rather than their exact
values; after Bienkowski~et~al. we dub the generalized problem \emph{Collecting
Items}. Their paper focuses on deterministic algorithms but it does provide
certain lower bounds for memoryless algorithms, matched by our algorithm.
See Appendix~\ref{sec:lb-in-appx} for details.

%%%%%%%%%%%%%%%%%%%%%%%%%%%%%%%%%%%%%%%%%%%%%%%%%%%%%%%%%%%%%%%%%%%%%%%%%%%%%%%%%%%%
%%%%%%%%%%%%%%%%%%%%%%%%%%%%%%%%%%%%%%%%%%%%%%%%%%%%%%%%%%%%%%%%%%%%%%%%%%%%%%%%%%%%

\mytable{results}{Comparison of known and new results.
The results of this paper are shown in boldface; a~reference next to such entry
means that this particular bound was already known. The results without
citations are implied by other entries of the table. An asterisk denotes that
the algorithm attaining the bound is memoryless scale-invariant.}
{
\begin{tabular}{lllll}
\toprule
 && deterministic & (rand.) adaptive & (rand.) oblivious \\
\midrule %--------
\multirow{2}{*}
{general}
 	& upper & $1.828$~\cite{bounded-delay-westermann}, $1.893^*$~\cite{bounded-delay-westermann} & {\boldmath $1.582^*$~\cite{rmix-arxiv}} & $1.582^*$~\cite{bounded-delay-chin-journal}\\
 	&lower & 1.618 & 1.333 & 1.25 \\
\midrule %--------
\multirow{2}{*}
{$s$-bounded}
 	& upper & $2-\frac{2}{s}+o(\frac{1}{s})^*$~\cite{bounded-delay-chin-journal} & {\boldmath $1/\left(1-(1-\frac{1}{s})^s\right)^*$} & $1/\left(1-(1-\frac{1}{s})^s\right)^*$ \\
	& lower & 1.618 & 1.333 & 1.25 \\
\midrule %--------
\multirow{2}{*}
{2-bounded}
 	& upper & $1.618^*$~\cite{diffserv-kesselman} &  {\boldmath $1.333^*$~\cite{bounded-delay-2-bounded}} & $1.25^*$~\cite{bounded-delay-chin-journal} \\
	& lower & 1.618~\cite{diffserv-np-single,chin-timesharing,hajek-scheduling} & 1.333~\cite{bounded-delay-2-bounded} & 1.25~\cite{chin-timesharing} \\
\bottomrule %--------
\end{tabular}
}

%%%%%%%%%%%%%%%%%%%%%%%%%%%%%%%%%%%%%%%%%%%%%%%%%%%%%%%%%%%%%%%%%%%%%%%%%%%%%%%%%%%%
%%%%%%%%%%%%%%%%%%%%%%%%%%%%%%%%%%%%%%%%%%%%%%%%%%%%%%%%%%%%%%%%%%%%%%%%%%%%%%%%%%%%

\subsection{Our Contribution}\label{sec:contrib}
We consider randomized algorithms against an adaptive adversary, motivated
by the following observation.
In reality, traffic through a~switch is not at all independent of the packet
scheduling algorithm.  For example, lost packets are typically resent, and
throughput through a~node affects the choice of routes for data~streams in
a~network. These phenomena~can be captured by the adaptive adversary model but not
by the oblivious one. The adaptive adversary model is also of its own theoretical
interest and has been studied in numerous other settings~\cite{borodin-book}.

The main contribution of this paper is a~simple memoryless scale-invariant
algorithm {\newMIX}, which may be viewed as {\oldMIX}, proposed
by Chin~et~al.~\cite{bounded-delay-chin-journal}, with a~different probability
distribution over pending packets. The competitive ratio of {\newMIX} is at most
$e/(e-1)$ on the one hand, but on the other it is provably better than that for
many restricted variants of the problem.
Some of the upper bounds we provide were known before (cf.~Table~\ref{tbl:results}),
but in general they were achieved by several different algorithms.

Specifically, {\newMIX} is $1/\left(1-(1-\frac{1}{N})^N\right)$-competitive against
adaptive adversary, where $N$ is the maximum, over steps, number of packets that
have positive probability of transmission in the step.
Note that $1/\left(1-(1-\frac{1}{N})^N\right)$ tends to $e/(e-1)$ from below.
The number $N$ can be bounded a~priori in certain restricted variants of the
problem, thus giving better bounds for them, as we discuss in detail in
Section~\ref{sec: implications}. For now let us mention that $N \leq s$ in
$s$-bounded instances and instances with at most $s$ different packet weights.
The particular upper bound of $4/3$ that we obtain for $2$-bounded instances is
tight in the adaptive adversary model~\cite{bounded-delay-2-bounded}.

As is the case with {\oldMIX}, both {\newMIX} and its analysis rely only on the
relative order between the packets' deadlines. Therefore our upper bound(s)
apply to the \emph{Collecting Items} problem~\cite{item-collecting}.
In fact, {\newMIX} is the optimum randomized memoryless algorithm for that
problem in a~strong sense, cf.~Appendix~\ref{sec:lb-in-appx}.

%%%%%%%%%%%%%%%%%%%%%%%%%%%%%%%%%%%%%%%%%%%%%%%%%%%%%%%%%%%%%%%%%%%%%%%%%%%%%%%%%%%%
%%%%%%%%%%%%%%%%%%%%%%%%%%%%%%%%%%%%%%%%%%%%%%%%%%%%%%%%%%%%%%%%%%%%%%%%%%%%%%%%%%%%

%%%%%%%%%%%%%%%%%%%%%%%%%%%%%%%%%%%%%%%%%%%%%%%%%%%%%%%%%%%%%%%%%%%%%%%%%%%%%%%%%
%%%%%%%%%%%%%%%%%%%%%%%%%%%%%%%%%%%%%%%%%%%%%%%%%%%%%%%%%%%%%%%%%%%%%%%%%%%%%%%%%

\section{General Upper Bound} %\label{sec: alg-and-analysis}

\subsection{Analysis technique}\label{sec: analysis-intro}

In our analysis, we follow the paradigm of modifying the adversary's buffer, 
introduced by Li et al.~\cite{bounded-delay-li}. Namely, we assume that
in each step the algorithm and the adversary have precisely the same pending
packets in their buffers. Once they both transmit a~packet, we modify the
adversary's buffer judiciously to make it identical with that of the algorithm.
This amortized analysis technique leads to a~streamlined and intuitive proof.
%,which does not use any potential function.

When modifying the buffer, we may have to let the adversary transmit another packet,
inject an extra~packet to his buffer, or upgrade one of the packets in its
buffer by increasing its weight or deadline. We will ensure that these changes will 
be \emph{advantageous to the adversary} in the following sense: for any
adversary strategy $\ADV$, starting with the current step and buffer content, 
there is an adversary strategy $\overline{\ADV}$ that continues computation with
the modified buffer, such that the total gain of $\overline{\ADV}$ from the
current step on (inclusive), on any instance, is at least as large as that of $\ADV$.

To prove $R$-competitiveness, we
show that in each step the expected \emph{amortized gain} of the adversary is at
most $R$ times the expected gain of the algorithm, where the former is the total
weight of the packets that $\ADV$ eventually transmitted in this step.  Both
expected values are taken over random choices of the algorithm.

We are going to assume that {\ADV} never transmits a~packet $a$ if there is
another pending packet $b$ such that transmitting $b$ is always advantageous to
{\ADV}. Formally, we introduce a~dominance relation among the pending packets
and assume that {\ADV} never transmits a~dominated packet.

We say that a~packet $a=(w_a,d_a)$ is \emph{dominated} by a~packet $b=(w_b,d_b)$
at time $t$ if at time $t$ both $a$ and $b$ are pending, 
\(w_a \leq w_b\) and \(d_a \geq d_b\). If one of these inequalities is
strict, we say that $a$ is \emph{strictly dominated} by $b$.
We say that packet $a$ is (strictly) dominated whenever there exists a~packet $b$
that (strictly) dominates it.
Then the following fact can be shown by a~standard exchange argument.
\begin{fact} \label{fact:dominance}
For any adversary strategy $\ADV$, there is a~strategy
$\overline{\ADV}$ with the following properties:
\begin{enumerate}%[(i)]
\item %\label{item:first_property}
the gain of $\overline{\ADV}$ on every sequence is at least the gain of {\ADV}, 
\item %\label{item:second_property}
in every step $t$, $\overline{\ADV}$ does not transmit a~strictly
dominated packet at time~$t$.
\end{enumerate}
\end{fact}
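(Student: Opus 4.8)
The plan is to peel the statement down to a purely combinatorial claim about offline schedules and then settle that claim by a standard exchange argument. Fix an instance $I$; since {\ADV}'s transmission choices never influence which packets are injected, it suffices to argue pointwise, i.e.\ with $I$ fixed (and, as usual, finite). On a fixed instance an adversary strategy is simply a \emph{feasible schedule}: an injective assignment of some subset of the packets to steps, each packet placed within its lifespan. So it is enough to show that any feasible schedule $S$ can be converted into a feasible schedule $S'$ that (i) transmits no strictly dominated packet and (ii) has total weight at least that of $S$; applying this to {\ADV}'s schedule on each instance yields $\overline{\ADV}$, which satisfies Property~2 by (i) and, since it never loses weight compared with {\ADV} on any instance, Property~1 by (ii).

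To build $S'$ I would take $S^*$ to be a feasible schedule that is lexicographically optimal for the pair (total transmitted weight, $-$(number of steps at which a strictly dominated packet is transmitted)); in particular its weight is maximal, hence $\geq$ that of $S$. The claim is that $S^*$ transmits no strictly dominated packet. Suppose otherwise and let $t$ be a step where $S^*$ transmits a packet $a$ strictly dominated at time~$t$. Walking up a chain of strict dominations among packets pending at time~$t$ — a chain along which weights are nondecreasing and deadlines (relative at time $t$, equivalently absolute, since all these packets are pending at $t$) nonincreasing, with one strict inequality per link, so all the $(w,d)$ pairs along it are distinct and the chain must end — I reach a packet $b$ that is pending at time $t$, strictly dominates $a$ (in particular $b\neq a$, $w_b\geq w_a$, and $b$ expires no later than $a$), and is itself not strictly dominated at time $t$. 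If $b$ is not transmitted by $S^*$, replace $a$ by $b$ in step $t$: feasible since $b$ is pending at $t$, and the weight does not drop; if $w_b>w_a$ this contradicts weight‑maximality, and if $w_b=w_a$ the weight is unchanged while step $t$ now carries the non‑dominated $b$ and no other step changed, contradicting lexicographic optimality. If $b$ is transmitted by $S^*$, say in step $t'$, then $t'\neq t$, and in fact $t'>t$, since $b$, being pending at time $t$, cannot have been transmitted and removed earlier. Swap the two transmissions: $b$ at $t$, $a$ at $t'$. This is feasible — $b$ is pending at $t$; $a$ was released by $t<t'$ and expires no earlier than $b$, which survived until $t'$ — and it leaves the transmitted multiset, hence the weight, unchanged. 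If $a$ is not strictly dominated at time $t'$, both steps $t$ and $t'$ now carry non‑strictly‑dominated packets, so the number of bad steps strictly dropped, a contradiction; otherwise $a$ is strictly dominated at time $t'$ and I repeat the whole argument with $t'$ in place of $t$ on the updated schedule.

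The one place that needs care — and the only nonroutine one — is arguing that this repetition halts. The key points are that the offending step strictly increases each round ($t<t'<t''<\cdots$) and is bounded above by $a$'s deadline, and that every swap touches only the current offending step and a strictly later one; hence the non‑dominated dominator produced in a round is a packet pending at the current offending step, so it is transmitted (if at all) strictly later and is distinct from every packet displaced in earlier rounds, ruling out cycling. After finitely many rounds we therefore land in the ``dominator not transmitted'' case, which yields the contradiction. Every feasibility check used above is immediate from the interval‑containment structure of a dominating pair (the dominator is pending whenever the dominated packet is and expires no later), so the remainder is routine bookkeeping; Part~2 of the Fact is exactly the established claim about $S^*$, and Part~1 follows from weight‑optimality of $S^*$.
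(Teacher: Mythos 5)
Your reduction to a purely offline, per-instance statement is correct (in the adaptive model the injections depend on the algorithm, not on $\ADV$'s transmissions, so you may condition on the algorithm's coins and treat $\ADV$ as a feasible schedule on a fixed finite instance), and your handling of Case~2 (the swap when the dominator $b$ is itself transmitted at some $t'>t$) is sound, including the termination argument. The problem is in Case~1 when $w_b=w_a$, where you assert that after replacing $a$ by $b$ at step $t$ ``no other step changed, contradicting lexicographic optimality.'' That assertion is about the \emph{badness} of the other steps, and it is false in general: although the packet transmitted at every step $s\neq t$ is the same, the \emph{pending set} at steps $s>t$ is not, since in $S'$ the packet $a$ (with the \emph{later} deadline) is now pending instead of $b$ (which is now gone). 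Concretely, if $w_b=w_a$ then $D_b<D_a$, and for any $s$ with $D_b\le s<D_a$ a packet $c$ transmitted at $s$ can be strictly dominated by $a$ in $S'$ even though it was undominated in $S^*$ (where $b$ had already expired). So the number of bad steps can \emph{increase}, and your chosen potential\,---\,(total weight, $-\#$bad steps)\,---\,does not yield a contradiction. The same issue is not hidden inside the Case~2 iteration, because there the transmitted multiset is preserved and $b$ remains pending throughout $(t,t')$ to ``cover'' $a$; it is specifically Case~1 that removes $b$ from the pool.

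The repair is small but necessary, and brings your argument into line with the paper's: take $t$ to be the \emph{earliest} bad step and use (total weight, latest index of the first violation) as the lexicographic criterion, rather than the count of violations. Your swaps never touch any step $<t$, Case~2 never introduces a violation strictly between $t$ and the current offending step, and Case~1 introduces violations (if any) only at steps $>t$. Hence the first violation is pushed strictly later, contradicting optimality of $S^*$. This is exactly the potential the paper uses, though the paper phrases it as an explicit iteration ($\ADV\to\ADV'\to\cdots$, pushing the earliest violation forward at each stage) instead of a single extremal schedule; the two framings are equivalent once the right quantity is tracked. (One further small point: the termination paragraph says you ``land in the `dominator not transmitted' case,'' but the iteration can also halt inside Case~2 when the swapped-in $a$ is undominated at $t'$, which you handled earlier; the phrasing should cover both exits.)
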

\begin{proof}
$\ADV$ can be transformed into $\overline{\ADV}$ iteratively: take the minimum
$t_0$ such that $\ADV$ first violates the second property in step~$t_0$, and
transform {\ADV} into an algorithm $\ADV'$ with gain no smaller than that of
$\ADV$, which satisfies the second property up to step $t_0$, possibly violating
it in further steps.

Let $t_0$ be the first step in which the second property is violated. Let $y=(w,d)$ 
be the packet transmitted by $\ADV$ and $x=(w',d')$ be the packet that dominates $y$;
then $w' \geq w$ and $d' \leq d$. Let $\ADV'$ transmit the same packets as $\ADV$
up to step~$t_0-1$, but in step $t_0$ let it transmit $x$, and in the remaining
steps let it try to transmit the same packets as $\ADV$. It is impossible in one
case only: when $\ADV$ transmits $x$ in some step $t$. But then $d \geq d' > t$,
so let $\ADV'$ transmit $y$, still pending at $t$. Clearly, the gain of $\ADV'$
is at least as large as the gain of $\ADV$.
\end{proof}
Let us stress that Fact~\ref{fact:dominance} holds for the adaptive adversary
model. Now we give an example of another simplifying assumption, often assumed
in the oblivious adversary model, which seems to break down in the adaptive
adversary model.
In the oblivious adversary model the instance is fixed in advance by the
adversary, so {\ADV} may precompute the optimum schedule to the instance and
follow it. Moreover, by standard exchange argument for the
\emph{fixed} set of packets to be transmitted, {\ADV} may always send the packet
with the smallest deadline from that set---this is usually called the
\emph{earliest deadline first} (EDF) property or order.
This assumption not only simplifies analyses of algorithms but is often crucial
for them to yields desired bounds~\cite{bounded-delay-chin-journal,%
bounded-delay-chrobak-improved,bounded-delay-li,agreeable-moje}.

In the adaptive adversary model, however, the following phenomenon occurs: 
as the instance $I$ is randomized, {\ADV} does not know for sure which packets
it will transmit in the future. Consequently, deprived of that knowledge,
it cannot ensure any specific order of packet transmissions.

%%%%%%%%%%%%%%%%%%%%%%%%%%%%%%%%%%%%%%%%%%%%%%%%%%%%%%%%%%%%%%%%%%%%%%%%%%%%%%%%

\subsection{The Algorithm}

We describe the algorithm's behavior in a~single step.
\begin{algorithm}[H]
\caption{{\newMIX} (single step)}
\begin{algorithmic}[1]
\If{there are no pending packets}
	\State do nothing and proceed to the next step
\EndIf
\State $m \gets 0$ \Comment counts packets that are not strictly dominated
\State $n \gets 0$ \Comment counts packets with positive probability assigned
\State $r \gets 1$ \Comment unassigned probability
\State $H_0 \gets$ pending packets
\State $h_0=(w_0,d_0) \gets$ heaviest packet from $H_0$%
%	\Comment text explains the role of $h_0$
%
\While{$H_m \neq \emptyset$}
	\State $m \gets m+1$
	\State $h_m = (w_m,d_m) \gets$ heaviest not strictly dominated packet from $H_{m-1}$
	\State $p_{m-1} \gets \min\{ 1- \frac{w_m}{w_{m-1}}, \ r \}$
	\State $r \gets r - p_{m-1}$
	\If{$r>0$}
		\State $n \gets n+1$
	\EndIf
	\State $H_m \gets \{ x \in H_{m-1}\ |\ x \text{ is not dominated by } h_m \}$
\EndWhile
\State $p_m \gets r$
%\State $r \gets 0$
%
\State \kw{transmit} $h$ chosen from $h_1,\ldots,h_n$  with probability
distribution $p_1,\ldots,p_n$
\State proceed to the next step
\end{algorithmic}
\end{algorithm}

We introduce the packet $h_0$ to shorten {\newMIX}'s pseudocode by making it
possible to set the value of $p_1$ in the first iteration of the loop.
The packet itself is chosen in such a~way that $p_0=0$, to make it clear that it
is not considered for transmission (unless $h_0=h_1$). The while loop itself could
be terminated as soon as $r=0$, because afterwards {\newMIX} does not assign
positive probability to any packet.
However, letting it construct the whole sequence $h_1,h_2,\dots h_m$
such that $H_m = \emptyset$ simplifies our analysis.
Before proceeding with the analysis, we note a~few facts about {\newMIX}.
\begin{fact}\label{fact:chain-dominance}
The sequence of packets $h_0,h_1,\ldots,h_m$ selected by {\newMIX} satisfies
\begin{align*}
w_0=\ &w_1 > w_2 > \dots > w_m \enspace,\\%\label{eq:weights}\\
	& d_1 > d_2 > \dots > d_m \enspace.%\label{eq:deadlines}
\end{align*}
Furthermore, every pending packet is dominated by one of
$h_1,\ldots,h_m$.
\end{fact}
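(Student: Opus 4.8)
The plan is to track what happens in each iteration of the while loop and argue by induction on $m$. The essential observation is that when the loop selects $h_m$ as the heaviest packet from $H_{m-1}$ that is not strictly dominated (within $H_{m-1}$), the set $H_m$ is obtained by discarding everything dominated by $h_m$, so the next packet $h_{m+1}$ must be genuinely lighter and have a genuinely later deadline than $h_m$.

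First I would establish the weight chain. By construction $h_1$ is the heaviest not-strictly-dominated packet of $H_0$; but the globally heaviest packet $h_0$ is never strictly dominated (nothing can have strictly larger weight), so $h_1 = h_0$ and $w_1 = w_0$. For the strict inequalities $w_m > w_{m+1}$: the packet $h_{m+1}$ lies in $H_m$, and $H_m$ contains no packet dominated by $h_m$. If we had $w_{m+1} \geq w_m$, then since $h_{m+1}$ survived into $H_m$ it is not dominated by $h_m$, which forces $d_{m+1} < d_m$; but then $h_{m+1}$ would strictly dominate $h_m$ (weight at least as large, deadline strictly smaller), contradicting the fact that $h_m$ was chosen as a not-strictly-dominated packet of $H_{m-1}$ — note $h_{m+1} \in H_m \subseteq H_{m-1}$, so it was available as a dominator at that earlier stage. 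Hence $w_{m+1} < w_m$. (I should double-check the edge behaviour: if two packets tie in weight, "heaviest not strictly dominated" still picks one of them, and the tie-break is irrelevant to the statement since the claim only asserts strict drops between consecutive chosen packets.)

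Next I would establish the deadline chain $d_1 > d_2 > \cdots > d_m$. Again $h_{m+1} \in H_m$ means $h_{m+1}$ is not dominated by $h_m$; since we just showed $w_{m+1} < w_m \le w_m$, the failure of domination $w_{m+1} \le w_m \wedge d_{m+1} \ge d_m$ must come from the deadline condition, giving $d_{m+1} < d_m$. This is the clean part once the weight chain is in hand.

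Finally, the "furthermore" clause: every pending packet is dominated by some $h_i$. The loop terminates with $H_m = \emptyset$, and an easy induction shows that $H_0 \setminus H_m$ is exactly the set of packets dominated by at least one of $h_1,\ldots,h_m$ — each step removes precisely the packets of $H_{m-1}$ dominated by $h_m$, and $H_0$ is the set of all pending packets. So at termination every pending packet has been removed, i.e.\ is dominated by some $h_i$. The main subtlety to get right — and the step I expect to need the most care — is the domination-versus-strict-domination bookkeeping in the weight argument: one must use that $h_m$ is not \emph{strictly} dominated within $H_{m-1}$ while $H_m$ excludes everything (weakly) dominated by $h_m$, and check that a hypothetical heavier successor would have produced a \emph{strict} domination of $h_m$, contradicting its selection.
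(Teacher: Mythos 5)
The paper itself states Fact~\ref{fact:chain-dominance} without proof, so your argument stands or falls on its own; the overall plan (induction on the loop, separate weight and deadline chains, the removal bookkeeping for the ``furthermore'' clause) is the right one, and your deadline chain and ``furthermore'' arguments are sound. However, there is a genuine gap in the weight-chain step. You suppose $w_{m+1} \geq w_m$ and claim that ``$h_{m+1}$ is not dominated by $h_m$'' forces $d_{m+1} < d_m$. Non-dominance says $\neg(w_{m+1} \leq w_m \wedge d_{m+1} \geq d_m)$, i.e.\ $w_{m+1} > w_m$ \emph{or} $d_{m+1} < d_m$. Under your hypothesis this only forces $d_{m+1} < d_m$ in the subcase $w_{m+1} = w_m$; if $w_{m+1} > w_m$ the deadline clause tells you nothing, and indeed $h_{m+1}$ need not strictly dominate $h_m$ (they could be incomparable). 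So the contradiction you reach for does not materialize in the $w_{m+1} > w_m$ subcase, and the ``subtlety'' you flagged at the end is misdiagnosed: a heavier successor need \emph{not} produce a strict domination of $h_m$.

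The subcase $w_{m+1} > w_m$ does lead to a contradiction, but via a different route: one first shows that $h_{m+1}$ is not strictly dominated by any packet of $H_{m-1}$ (it is not strictly dominated within $H_m$ by the selection rule, and any candidate dominator $y \in H_{m-1}\setminus H_m$ satisfies $w_y \leq w_m < w_{m+1}$, since $y$ was dominated by $h_m$, so $y$ cannot dominate $h_{m+1}$). Then $w_{m+1} > w_m$ contradicts $h_m$ being the \emph{heaviest} not-strictly-dominated packet of $H_{m-1}$. You need this maximality-of-$h_m$ argument, not a strict-domination-of-$h_m$ argument, for the strictly-heavier subcase. A separate, smaller slip: $h_0$ can in fact be strictly dominated, by a packet of equal weight and strictly smaller deadline, so $h_1 = h_0$ is not guaranteed; what does hold is that some packet of maximum weight is not strictly dominated (take the one with the smallest deadline among the maximum-weight packets), giving $w_1 = w_0$ without $h_1 = h_0$.
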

\begin{fact}\label{fact:probabilities}
The numbers $p_1,p_2,\dots,p_m$ form a~probability distribution
%(over $h_1,h_2,\dots h_m$)
such that
\begin{equation}\label{eq:pbbs-ineq}
p_i \leq 1-\frac{w_{i+1}}{w_i} \qquad\text{ for all } i < m \enspace.
\end{equation}
Furthermore, the bound is tight for $i<n$, while $p_i=0$ for $i>n$, i.e.,
%
%\begin{align}
%p_i &= 1-\frac{w_{i+1}}{w_i} \text{ for all } i < n \enspace, \label{eq:pbbs-eq}\\
%p_i &= 0 \text{ for all } i > n \enspace.\label{eq:pbbs-zero}
%\end{align}
%
\begin{equation}\label{eq:pbbs-eq}
p_i = 
\begin{cases}
	1-\frac{w_{i+1}}{w_i}, & \qquad \text{ for } i<n \\
	0, & \qquad \text{ for } i>n
\end{cases}
\end{equation}

\end{fact}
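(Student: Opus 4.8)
\emph{Proof plan.}
The plan is to unwind the \textsc{while} loop and track the ``unassigned probability'' variable~$r$. First I would fix notation: let $m$ denote the final value of the loop counter (so that $H_m=\emptyset$), and for $k=1,\dots,m+1$ let $r^{(k)}$ be the value of $r$ at the start of the $k$-th iteration, with $r^{(m+1)}$ being its value after the loop ends. Thus $r^{(1)}=1$; in iteration $k$ the algorithm sets $p_{k-1}=\min\{1-w_k/w_{k-1},\,r^{(k)}\}$ and then $r^{(k+1)}=r^{(k)}-p_{k-1}$; and finally $p_m=r^{(m+1)}$. Note $p_0=0$ since $w_1=w_0$ by Fact~\ref{fact:chain-dominance}.

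For the probability-distribution claim I would argue by induction that $r^{(k)}\ge 0$ for every $k$: each $p_{k-1}$ is the minimum of two nonnegative quantities (nonnegativity of $1-w_k/w_{k-1}$ uses $w_k\le w_{k-1}$ from Fact~\ref{fact:chain-dominance}) and satisfies $p_{k-1}\le r^{(k)}$, hence $r^{(k+1)}=r^{(k)}-p_{k-1}\ge 0$. So all $p_i\ge 0$, and telescoping gives $\sum_{i=0}^{m}p_i=(1-r^{(m+1)})+r^{(m+1)}=1$; since $p_0=0$ this reads $\sum_{i=1}^{m}p_i=1$. The inequality~\eqref{eq:pbbs-ineq} is then immediate, as $p_i=\min\{1-w_{i+1}/w_i,\,r^{(i+1)}\}\le 1-w_{i+1}/w_i$ for every $i<m$.

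The heart of the argument is identifying what $n$ counts. Since every $p_{k-1}\ge 0$, the sequence $r^{(1)}\ge r^{(2)}\ge\cdots$ is nonincreasing, so once it reaches $0$ it stays $0$. The counter $n$ is incremented in iteration $k$ exactly when $r^{(k+1)}>0$, so by monotonicity $r^{(k)}>0$ precisely for $k\le n+1$ and $r^{(k)}=0$ for $k\ge n+2$ (and $n\ge 1$ because $r^{(2)}=1-p_0=1$). For $i<n$: iteration $i+1$ has $r^{(i+2)}=r^{(i+1)}-p_i>0$, so $p_i<r^{(i+1)}$; being the minimum of $1-w_{i+1}/w_i$ and $r^{(i+1)}$ and strictly below the latter, $p_i$ must equal the former, i.e.\ $p_i=1-w_{i+1}/w_i$. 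For $i>n$: if $i<m$ then iteration $i+1$ has $r^{(i+1)}=0$, so $p_i=\min\{1-w_{i+1}/w_i,\,0\}=0$; and if $i=m>n$ then $p_m=r^{(m+1)}=0$ directly. Together these give~\eqref{eq:pbbs-eq}.

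I expect no analytic difficulty here; the only thing requiring care is the bookkeeping, namely the off-by-one between the iteration index $k$ and the probability index $p_{k-1}$, the fact that $n$ is incremented \emph{after} $r$ is decremented, and the boundary cases $i=0$, $i=m$, and $m=n$ (in which last case the clause ``$p_i=0$ for $i>n$'' is vacuous). Everything else reduces to the monotonicity of $r$ and the defining $\min$.
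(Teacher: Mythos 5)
Correct. The paper states this fact without a proof, treating it as an immediate consequence of the pseudocode; your unwinding of the \textsc{while} loop---tracking the nonincreasing sequence $r^{(1)},r^{(2)},\dots$, telescoping to get $\sum p_i=1$, identifying $n$ as the number of iterations after which $r$ remains positive, and handling the boundary case $n=m$---supplies exactly the bookkeeping the paper leaves implicit.
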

\begin{theorem}\label{thm: main}
{\newMIX} is $1/\left(1-(1-\frac{1}{N})^N\right)$-competitive against an adaptive
adversary, where $N$ is the maximum, over steps, number of packets that are
assigned positive probability in a~step.
\end{theorem}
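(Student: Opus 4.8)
The plan is to work step by step in the amortized framework described in Section~\ref{sec: analysis-intro}. At the start of a step we assume $\ALG$ and $\ADV$ hold the same buffer, $\ALG$ runs {\newMIX} choosing $h$ from $h_1,\dots,h_n$ with probabilities $p_1,\dots,p_n$, and $\ADV$ transmits some packet $a$; by Fact~\ref{fact:dominance} we may assume $a$ is not strictly dominated, so by Fact~\ref{fact:chain-dominance} $a$ is dominated by (hence, being undominated, equal in weight-and-deadline class to) one of $h_1,\dots,h_m$, and in particular $w_a \le w_1$. I must exhibit, for each outcome $h_i$ of $\ALG$'s coin, a modification of $\ADV$'s buffer that (i) makes it coincide with $\ALG$'s post-transmission buffer, (ii) is advantageous to the adversary in the sense defined, and (iii) has amortized cost — the weight $\ADV$ is deemed to transmit this step — bounded so that the expected amortized cost is at most $R \cdot \E[w_h]$ with $R = 1/(1-(1-1/N)^N)$.

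**The core case analysis.** The natural modification, as in \cite{bounded-delay-li} and the {\oldMIX} analysis, is: if $\ALG$ sends $h_i$ and $\ADV$ sends $a$, then $\ADV$ is charged $w_a$; to reconcile buffers we let $\ADV$ additionally ``virtually'' send $h_i$ (charging its weight only if $h_i\ne a$) and, when $h_i$ would otherwise be lost to $\ADV$ relative to $\ALG$, we re-inject or upgrade a packet — here the chain structure $w_1>\dots>w_m$, $d_1>\dots>d_m$ from Fact~\ref{fact:chain-dominance} is what makes the re-injection legitimate and advantageous (a copy of some $h_j$ with $j>i$ can be upgraded to $h_i$ because it has smaller weight and larger deadline, or an expired slot is filled by a lighter, later-deadline packet). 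The amortized gain of $\ADV$ in the step is then at most $w_a + (\text{weight of the packet handed back to }\ALG\text{ but not }\ADV)$, which one bounds by $w_1$ plus a term controlled by the $p_j$'s. Using Fact~\ref{fact:probabilities}, in particular $p_i \le 1-w_{i+1}/w_i$ with equality for $i<n$, the expected algorithm gain is $\E[w_h]=\sum_{i=1}^n p_i w_i$, and a telescoping/Abel-summation argument rewrites $\sum p_i w_i$ and $\sum p_i\cdot(\text{amortized cost given }h_i)$ so that the ratio is maximized exactly when all $n$ gaps $1-w_{i+1}/w_i$ are equal, i.e.\ the weights form a geometric sequence with ratio $1-1/n$. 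Plugging $w_i = w_1(1-1/n)^{i-1}$ gives $\E[w_h] = w_1\bigl(1-(1-1/n)^n\bigr)$ against an amortized adversary gain of at most $w_1$, yielding ratio $1/(1-(1-1/n)^n)$; since $n\le N$ and $1/(1-(1-1/x)^x)$ is increasing, $R=1/(1-(1-1/N)^N)$ follows. One also checks separately the trivial sub-cases ($\ADV$ idle, $\ALG$'s buffer empty, or $h_i=a$) and that the additive constant is $0$.

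**The main obstacle.** The hard part will be item (ii) — verifying that every buffer modification is genuinely advantageous to the adaptive adversary, because, as the excerpt stresses in Section~\ref{sec: analysis-intro}, in the adaptive model $\ADV$ cannot assume the EDF ordering and does not know which packets it will send later. So I cannot argue ``relabel the optimal schedule''; instead I must produce, for an arbitrary continuation strategy $\ADV$ from the modified buffer, an explicit continuation $\overline{\ADV}$ from $\ALG$'s buffer matching its gain, by a swapping argument that tracks a single ``defaulted'' packet forward in time much as in the proof of Fact~\ref{fact:dominance}. Making this exchange work simultaneously with the probabilistic charging — i.e.\ choosing the re-injected/upgraded packet as a function of $i$ so that the chain inequalities of Fact~\ref{fact:chain-dominance} always license the swap, while keeping the amortized cost within the $\sum p_i w_i$ budget — is the delicate coordination that the whole argument hinges on. The optimization over weight profiles reducing to the geometric sequence, and the resulting closed form, I expect to be routine once the charging scheme is pinned down.
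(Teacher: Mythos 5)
Your plan follows the paper's proof closely in structure: amortized analysis in the framework of Section~\ref{sec: analysis-intro}, with $\ADV$ (wlog) sending some $h_z$, and the step-wise bounds $\E[\GGADV]\leq w_1$ and $\E[\GRMIX]\geq w_1(1-(1-1/n)^n)$. Two things, however, deserve correction. First, your charging rule ``charge the additional weight $w_{h_i}$ only if $h_i\neq a$'' is not the condition the argument needs. The extra term appears only when {\newMIX}'s packet $h_f$ has a strictly \emph{earlier} deadline than $\ADV$'s packet $h_z$ (equivalently $f<z$), so that $h_f$ has to be handed to $\ADV$ as a bonus and a fresh copy of $h_z$ injected; when $d_f\geq d_z$ one merely upgrades $\ADV$'s $h_f$ to a copy of $h_z$ and no extra charge arises, even if $h_f\neq h_z$. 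This gives the exact formula $\E[\GADV{z}]=w_z+\sum_{i<z}p_iw_i$, and the inequality $p_i\leq 1-w_{i+1}/w_i$ shows this is nonincreasing in $z$, hence at most $w_1$. Your version of the case split would overcount.

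Second, and more importantly, you declare as ``routine'' the one step that actually requires an argument: the uniform lower bound $\E[\GRMIX]\geq w_1(1-(1-1/n)^n)$. Plugging in the geometric profile $w_i=w_1(1-1/n)^{i-1}$ only shows the bound is \emph{tight} there; you still need to prove it holds for \emph{every} admissible profile. The paper does this by rewriting $\E[\GRMIX]=w_1\bigl(1-\prod_{i<n}(1-p_i)\sum_{i<n}p_i\bigr)$ using Fact~\ref{fact:probabilities}, and then applying AM--GM to the $n$ nonnegative numbers $(1-p_1),\dots,(1-p_{n-1}),\sum_{i<n}p_i$, whose sum is $n-1$; this gives $\prod_{i<n}(1-p_i)\cdot\sum_{i<n}p_i\leq (1-1/n)^n$ in one line. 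Conversely, the point you flag as the ``main obstacle'' --- verifying the modifications are advantageous to the adaptive adversary --- is the easy part here: both modifications strictly enlarge or upgrade $\ADV$'s buffer (replace $h_f$ by a dominating $h_z$, or give $\ADV$ an extra transmission plus an extra packet), so any continuation strategy can be run verbatim from the modified buffer with at least the same gain; no EDF reordering or schedule relabeling is invoked, so the adaptive-adversary subtlety you anticipate does not arise.
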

\begin{proof}
For a~given step, we describe the changes to {\ADV}'s scheduling decisions
and modifications to its buffer that make it the same as {\newMIX}'s buffer.
Then, to prove our claim, we will show that
\begin{align}
	\E\left[\GGADV\right]	&\leq w_1 \enspace, \label{eq: adv-gain}\\
	\E\left[\GRMIX\right]	&\geq w_1 \left(1-(1-\frac{1}{n})^n\right) \enspace, \label{eq: alg-gain}
\end{align}
where $n$ is the number of packets assigned positive probability in the step.
The theorem follows by summation over all steps.

Recall that, by Fact~\ref{fact:dominance}, {\ADV} (wlog) sends a~packet that is
not strictly dominated. By Fact~\ref{fact:chain-dominance}, the packets
$h_1,h_2,\dots h_m$ dominate all pending packets, so the one sent by {\ADV},
say $p$ is (wlog) one of $h_1,h_2,\dots h_m$: if $p$ is dominated by $h_i$,
but not strictly dominated, then $p$ has the same weight and deadline as $h_i$.

We begin by describing modifications to $\ADV$'s buffer and estimate 
$\ADV$'s amortized gain. To this end we need to fix the packet sent
by {\newMIX}, so let us assume it is $h_f=(w_f,d_f)$.
Assume that $\ADV$ transmits a~packet $h_z=(w_z,d_z)$.
We will denote the adversary's amortized gain given the latter assumption
by $\GADV{z}$. We consider two cases. 
\begin{description}
\item{Case~1:} $d_f \leq d_z$.
Then $w_f \leq w_z$, since $h_z$ is not dominated.  After both $\ADV$
and $\newMIX$ transmit their packets, we replace $h_f$ in the buffer
of $\ADV$ by a~copy of $h_z$.
This way their buffers remain the same afterwards, and the change is
advantageous to {\ADV}: this is essentially an upgrade of the packet $h_f$ in
its buffer, as both $d_f \leq d_z$ and $w_f \leq w_z$ hold.

\item{Case~2:} $d_f > d_z$.
After both $\ADV$ and $\newMIX$ transmit their packets,
we let $\ADV$ additionally transmit $h_f$, and we inject a~copy of $h_z$ into
its buffer, both of which are clearly advantageous to {\ADV}.
This makes the buffers of {\ADV} and {\newMIX} identical afterwards.
\end{description}

We start by proving~\eqref{eq: adv-gain}, the bound on the adversary's
expected amortized gain. Note that $\ADV$ always gains $w_z$, and if
$d_z < d_f$ ($z>f$), it additionally gains $w_f$. Thus, when {\ADV} transmits
$h_z$, its expected amortized gain is
\begin{equation} \label{eq: adv-gain-fixed}
	\E\left[\GADV{z}\right] = w_z + \sum_{i<z} p_i w_i \enspace. 
\end{equation}
As the adversary's expected amortized gain satisfies
\begin{equation*}% \label{eq: adv-gain-max}
	\E\left[\GGADV\right] \leq \max_{1\leq i \leq m} \left\{ \E\left[\GADV{i}\right] \right\} \enspace,
\end{equation*}
to establish~\eqref{eq: adv-gain}, we will prove that
\begin{equation} \label{eq: adv-gain-max-bound}
	\max_{1\leq i \leq m} \left\{ \E\left[\GADV{i}\right] \right\} \leq \GADV{1} = w_1 \enspace.
\end{equation}
The equality in~\eqref{eq: adv-gain-max-bound} follows trivially
from~\eqref{eq: adv-gain-fixed}. To see that the inequality
in~\eqref{eq: adv-gain-max-bound} holds as well, observe that,
by~\eqref{eq: adv-gain-fixed}, for all $j < m$,
\begin{equation}
	\E\left[\GADV{i}\right] - \E\left[\GADV{i+1}\right] = w_i - w_{i+1} - p_i w_i
	 \geq 0 \enspace, \label{eq: adv-gain-diff}
\end{equation}
where the inequality follows from~\eqref{eq:pbbs-ineq}.%Fact~\ref{fact:probabilities}

Now we turn to~\eqref{eq: alg-gain}, the bound on the expected gain of
$\newMIX$ in a~single step. Obviously,
\begin{equation} \label{eq: alg-gain-trivial}
	\E\left[\GRMIX\right] = \sum_{i=1}^n p_i w_i	\enspace.
\end{equation}
By~\eqref{eq:pbbs-eq}, $p_i w_i = w_i - w_{i+1}$ for all $i<n$. Also,
$p_n = 1 -\sum_{i<n} p_i$, by Fact~\ref{fact:probabilities}.
Making corresponding substitutions in~\eqref{eq: alg-gain-trivial} yields
\begin{align}
	\E\left[\GRMIX\right] &= \left(\sum_{i=1}^{n-1} \left(w_i - w_{i+1}\right)\right)
		+ \left(1-\sum_{i=1}^{n-1} p_i\right) w_n \nonumber \\
		&= w_1 - w_n \sum_{i=1}^{n-1} p_i	\enspace. \label{eq:alg-gain-pbbs}
\end{align}
As~\eqref{eq:pbbs-eq} implies $w_i = w_{i-1}(1-p_{i-1})$ for all $i\leq n$,
we can express $w_n$ as
\begin{equation} \label{eq: w_n-as-w_1}
	w_n = w_1 \prod_{i=1}^{n-1} (1-p_i) \enspace.
\end{equation}
Substituting~\eqref{eq: w_n-as-w_1} for $w_n$ in~\eqref{eq:alg-gain-pbbs},
we obtain
\begin{equation} \label{eq: alg-gain-w_1-only}
	\E\left[\GRMIX\right] = w_1 \left(1- \prod_{i=1}^{n-1} (1-p_i)
		\sum_{i=1}^{n-1} p_i \right)	\enspace.
\end{equation}
Note that
\begin{equation*} %\label{eq: ppbs-sum}
	\sum_{i=1}^{n-1} (1-p_i) + \left( \sum_{i=1}^{n-1} p_i \right) = n-1 \enspace,
\end{equation*}
and therefore the inequality between arithmetic and geometric means yields
\begin{equation} \label{eq: pbbs-prod}
	\prod_{i=1}^{n-1} (1-p_i) \sum_{i=1}^{n-1} p_i \leq (1-\frac{1}{n})^n \enspace.
\end{equation}
Plugging~\eqref{eq: pbbs-prod} into~\eqref{eq: alg-gain-w_1-only} yields
\begin{equation*} %\label{eq: alg-final-bound}
	\E\left[\GRMIX\right] \geq w_1 \left( 1 - (1-\frac{1}{n})^n \right) \enspace,
\end{equation*}
which proves~\eqref{eq: alg-gain}, and together with~\eqref{eq: adv-gain},
the whole theorem.
\end{proof}

%%%%%%%%%%%%%%%%%%%%%%%%%%%%%%%%%%%%%%%%%%%%%%%%%%%%%%%%%%%%%%%%%%%%%%%%%%%%%%%%

\subsection{Rationale behind the probability distribution}\label{sec:dist}

Recall that the upper bound on the competitive ratio of {\newMIX} is
\begin{equation} \label{eq: ratio}
	\frac{\max_{1\leq z \leq m} \{ \E\left[\GADV{z}\right] \}}{\E\left[\GRMIX\right]} \enspace,
\end{equation}
irrespective of the choice of $p_1,\ldots,p_m$.

The particular probability distribution used in {\newMIX} is chosen to
(heuristically) minimize above ratio by maximizing $\E\left[\GRMIX\right]$,
while keeping~\eqref{eq: adv-gain-max-bound} satisfied, i.e., keeping
$\E\left[\GGADV\right]\leq\GADV{1}=w_1$.

The first goal is trivially achieved by setting $p_1 \gets 1$. This however
makes $\E\left[\GADV{z}\right] > w_1$ for all $z>1$. Therefore, some of the
probability mass is transferred to $p_2,p_3,\ldots$ in the following way.
To keep $\E\left[\GRMIX\right]$ as large as possible, $p_2$ is greedily set
to its maximum, if there is any unassigned probability left, $p_3$ is set to its
maximum, and so on. As $\E\left[\GADV{z}\right]$ does not depend on $p_i$ for
$i \geq z$, the values $\E\left[\GADV{z}\right]$ can be equalized with $w_1$
sequentially, with $z$ increasing, until there is no unassigned probability left.
Equalizing $\E\left[\GADV{j}\right]$ with $\E\left[\GADV{j-1}\right]$ consists
in setting $p_{j-1} \gets 1- \frac{w_j}{w_{j-1}}$, as shown
in~\eqref{eq: adv-gain-diff}. The same inequality shows what is intuitively clear:
once there is no probability left and further values $\E\left[\GADV{z}\right]$
cannot be equalized, they are only smaller than $w_1$.

The lower bound for the \emph{Collecting Items} problem~\cite{item-collecting},
presented in Appendix~\ref{sec:lb-in-appx}, proves that this heuristic does
minimize~\eqref{eq: ratio}.

%%%%%%%%%%%%%%%%%%%%%%%%%%%%%%%%%%%%%%%%%%%%%%%%%%%%%%%%%%%%%%%%%%%%%%%%%%%%%%%%

\subsection{Implications for Restricted Variants}
\label{sec: implications}

We have already mentioned that for $s$-bounded instances or those with at most
$s$ different packet weights, $N \leq m \leq s$ in Theorem~\ref{thm: main}, which
trivially follows from Fact~\ref{fact:chain-dominance}. Thus for either kind
of instances {\newMIX} is $1/\left(1-(1-\frac{1}{s})^s\right)$-competitive.
In particular, on $2$-bounded instances {\newMIX} coincides with the previously
known optimal $4/3$-competitive algorithm
\textsc{Rand}~\cite{bounded-delay-2-bounded}
for the adaptive adversary model.

Sometimes it may be possible to give more sophisticated bounds on $N$,
and consequently on the competitive ratio for particular variant of the problem,
as we now explain. The reason for considering only the packets $h_0,h_1,\ldots,h_m$
is clear: by Fact~\ref{fact:dominance} and Fact~\ref{fact:chain-dominance}, {\ADV}
(wlog) transmits one of them. Therefore, {\newMIX} tries to mimic {\ADV}'s
behavior by adopting a~probability distribution over these packets (recall that
in the analysis the packets pending for {\newMIX} and {\ADV} are exactly the same)
that keeps the maximum, over {\ADV}'s choices, expected amortized gain of {\ADV}
and its own expected gain as close as possible (cf.~Section~\ref{sec:dist}).
Now, if for whatever reason we know that {\ADV} is going to transmit a~packet
from some set $S$, then $H_0$ can be initialized to $S$ rather than all pending
packets, and Theorem~\ref{thm: main} will still hold. And as the upper bound
guaranteed by Theorem~\ref{thm: main} depends on $N$, it might improve if the
cardinality of $S$ is small.

While it seems unlikely that bounds for any restricted variant other than
$s$-bounded instances or instances with at most $s$ different packet weights can
be obtained this way, there is one interesting example that shows it is possible.
For similarly ordered instances (aka~instances with agreeable deadlines)
and oblivious adversary one can always find such set $S$ of cardinality at most
$2$~\cite[Lemma~2.7]{agreeable-moje}; while not explicitly stated, this fact was
proved before by Li~et~al.~\cite{bounded-delay-li}.
Roughly, the set $S$ contains the earliest-deadline and the heaviest packet from
any optimal provisional schedule. The latter is the optimal schedule under the
assumption that no further packets are ever injected, and as such can be found
in any step.

%%%%%%%%%%%%%%%%%%%%%%%%%%%%%%%%%%%%%%%%%%%%%%%%%%%%%%%%%%%%%%%%%%%%%%%%%%%%%%%%

\section{Conclusion and Open Problems}

While {\newMIX} is very simple to analyze, it subsumes almost all
previously known randomized algorithms for packet scheduling and provides new
bounds for several restricted variants of the problem.
One notable exception is the optimum algorithm against oblivious adversary
for $2$-bounded instances~\cite{bounded-delay-chin-journal}. This exposes that
the strength of our analysis, i.e., applicability to adaptive adversary model,
is most likely a~weakness at the same time. The strongest lower bounds on
competitive ratio for oblivious and adaptive adversary differ. And as both are
tight for $2$-bounded instances, it seems impossible to obtain an upper bound
smaller than $4/3$ on the competitive ratio of {\newMIX} for any non-trivial
restriction of the problem in the oblivious adversary model.

In both the algorithm and its analysis it is the respective order of packets'
deadlines rather than their exact values that matter.
Therefore, our results are also applicable to the \emph{Collecting Items}
problem~\cite{item-collecting}, briefly described in
Section~\ref{sec: items}. As mentioned in Section~\ref{sec:contrib}, {\newMIX}
is the optimum randomized memoryless algorithm for \emph{Collecting Items},
cf.~Appendix~\ref{sec:lb-in-appx}.

Therefore, to beat either the general bound of $e/(e-1)$, or any of the
$1/\left(1-(1-\frac{1}{s})^s\right)$ bounds for $s$-bounded instances for buffer
management with bounded delay, one either needs to consider algorithms that are
not memoryless scale-invariant, or better utilize the knowledge of exact
deadlines---in the analysis at least, if not in the algorithm itself.

Last but not least, let us remark again that {\newMIX} and its analysis
might automatically provide better bounds for further restricted variants of the
problem, provided that some insight allows to confine the adversary's choice of
packets for transmission in a~step, while knowing which packets are pending for
it---one such example is the algorithm for similarly ordered instances
(aka~instances with agreeable deadlines)~\cite{agreeable-moje}, as we discussed
in Section~\ref{sec: implications}.

%%%%%%%%%%%%%%%%%%%%%%%%%%%%%%%%%%%%%%%%%%%%%%%%%%%%%%%%%%%%%%%%%%%%%%%%%%%%%%%%

\bibliographystyle{abbrv} 
\bibliography{references}

\begin{thebibliography}{10}

\bibitem{diffserv-np-single}
N.~Andelman, Y.~Mansour, and A.~Zhu.
\newblock Competitive queueing policies for qos switches.
\newblock In {\em Proc.~of the 14th ACM-SIAM Symp. on Discrete Algorithms
  (SODA)}, pages 761--770, 2003.

\bibitem{packet-scheduling-azar-survey}
Y.~Azar.
\newblock Online packet switching.
\newblock In {\em Proc.~of the 2nd Workshop on Approximation and Online
  Algorithms (WAOA)}, pages 1--5, 2004.

\bibitem{relating-adversaries}
S.~{Ben-David}, A.~Borodin, R.~M. Karp, G.~Tardos, and A.~Wigderson.
\newblock On the power of randomization in online algorithms.
\newblock {\em Algorithmica}, 11(1):2--14, 1994.
\newblock Also appeared in {\em Proc.~of the 22nd STOC}, pages 379--386, 1990.

\bibitem{item-collecting}
M.~Bienkowski, M.~Chrobak, C.~D{\"u}rr, M.~Hurand, A.~Je{\.z}, {\L}.~Je{\.z},
  and G.~Stachowiak.
\newblock Collecting weighted items from a dynamic queue.
\newblock In {\em Proc.~of the 20th ACM-SIAM Symp. on Discrete Algorithms
  (SODA)}, pages 1126--1135, 2009.

\bibitem{bounded-delay-2-bounded}
M.~Bienkowski, M.~Chrobak, and {\L}.~Je{\.z}.
\newblock Randomized algorithms for buffer management with 2-bounded delay.
\newblock In {\em Proc.~of the 6th Workshop on Approximation and Online
  Algorithms (WAOA)}, pages 92--104, 2008.

\bibitem{borodin-book}
A.~Borodin and R.~{El-Yaniv}.
\newblock {\em Online Computation and Competitive Analysis}.
\newblock Cambridge University Press, 1998.

\bibitem{bounded-delay-chin-journal}
F.~Y.~L. Chin, M.~Chrobak, S.~P.~Y. Fung, W.~Jawor, J.~Sgall, and T.~Tich\'y.
\newblock Online competitive algorithms for maximizing weighted throughput of
  unit jobs.
\newblock {\em Journal of Discrete Algorithms}, 4:255--276, 2006.

\bibitem{chin-timesharing}
F.~Y.~L. Chin and S.~P.~Y. Fung.
\newblock Online scheduling for partial job values: {Does} timesharing or
  randomization help?
\newblock {\em Algorithmica}, 37:149--164, 2003.

\bibitem{bounded-delay-chrobak-improved}
M.~Chrobak, W.~Jawor, J.~Sgall, and T.~Tich\'y.
\newblock Improved online algorithms for buffer management in {QoS} switches.
\newblock {\em ACM Transactions on Algorithms}, 3(4), 2007.
\newblock Also appeared in {\em Proc.~of the 12th ESA}, pages 204--215, 2004.

\bibitem{bounded-delay-westermann}
M.~Englert and M.~Westermann.
\newblock Considering suppressed packets improves buffer management in {QoS}
  switches.
\newblock In {\em Proc.~of the 18th ACM-SIAM Symp. on Discrete Algorithms
  (SODA)}, pages 209--218, 2007.

\bibitem{EpsSte04A}
L.~Epstein and R.~van Stee.
\newblock Buffer management problems.
\newblock {\em Sigact News}, 35:58--66, 2004.

\bibitem{packet-scheduling-goldwasser-survey}
M.~Goldwasser.
\newblock A survey of buffer management policies for packet switches.
\newblock {\em SIGACT News}, 41(1):100--128, 2010.

\bibitem{hajek-scheduling}
B.~Hajek.
\newblock On the competitiveness of online scheduling of unit-length packets
  with hard deadlines in slotted time.
\newblock In {\em Conference in Information Sciences and Systems}, pages
  434--438, 2001.

\bibitem{rmix-arxiv}
{\L}.~Je{\.z}.
\newblock Randomised buffer management with bounded delay against adaptive
  adversary.
\newblock {\em CoRR}, abs/0907.2050, 2009.

\bibitem{agreeable-moje}
{\L}.~Je{\.z}.
\newblock Randomized algorithm for agreeable deadlines packet scheduling.
\newblock In {\em Proc.~of the 27th Symp. on Theoretical Aspects of Computer
  Science (STACS)}, pages 489--500, 2010.

\bibitem{diffserv-kesselman}
A.~Kesselman, Z.~Lotker, Y.~Mansour, B.~{Patt-Shamir}, B.~Schieber, and
  M.~Sviridenko.
\newblock Buffer overflow management in {QoS} switches.
\newblock {\em SIAM Journal on Computing}, 33(3):563--583, 2004.
\newblock Also appeared in {\em Proc.~of the 33rd STOC}, pages 520--529, 2001.

\bibitem{bounded-delay-li}
F.~Li, J.~Sethuraman, and C.~Stein.
\newblock An optimal online algorithm for packet scheduling with agreeable
  deadlines.
\newblock In {\em Proc.~of the 16th ACM-SIAM Symp. on Discrete Algorithms
  (SODA)}, pages 801--802, 2005.

\bibitem{bounded-delay-anzhu}
A.~Zhu.
\newblock Analysis of queueing policies in {QoS} switches.
\newblock {\em Journal of Algorithms}, 53(2):137--168, 2004.

\end{thebibliography}

%%%%%%%%%%%%%%%%%%%%%%%%%%%%%%%%%%%%%%%%%%%%%%%%%%%%%%%%%%%%%%%%%%%%%%%%%%%%%%%%

\newpage
\begin{appendix}

\section{Lower Bound for \emph{Collecting Items}}\label{sec:lb-in-appx}

In this section, for completeness, we evoke the lower bound on the
\emph{Collecting Items} problem~\cite{item-collecting}. As the proof is omitted
in the original article due to space constraints, and the original theorem
statement therein is not parametrized by $N$, we restate the theorem.

\begin{theorem}[Theorem 6.3 of \cite{item-collecting}]
For every randomized memoryless algorithm for the \emph{Collecting Items} problem,
there is an adaptive adversary's strategy using at most $N$ different
packet weights such that the algorithm's competitive ratio against the strategy
is at least $1/\left(1-(1-\frac{1}{N})^N\right)$, and at every step the algorithm
has at most $N$ packets in its queue.
\end{theorem}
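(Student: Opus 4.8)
The plan is to exploit the algorithm's memorylessness against it. A memoryless randomized algorithm acts, in any step, by a fixed probability distribution that depends only on the current queue content; so if the adaptive adversary can force the algorithm to face essentially the \emph{same} queue again and again, the algorithm becomes predictable and can be best‑responded to. I would therefore have the adversary reuse a single ``hard'' configuration: a dominance chain $h_1,\dots,h_N$ (in the sense of Fact~\ref{fact:chain-dominance}) with weights $w_i=\beta^{i-1}w_1$ and strictly decreasing relative deadlines, where $\beta=1-\tfrac1N$. The geometric ratio $\beta$ is forced rather than guessed: it is exactly the ratio at which the analysis of Theorem~\ref{thm: main} is tight, since the $(1-\tfrac1n)^n$ in~\eqref{eq: pbbs-prod} comes from the AM--GM step with all $p_i$ equal, i.e.\ from $w_{i+1}/w_i=1-\tfrac1n$. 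With this choice the adversary's per‑step value $w_z+\sum_{i<z}q_iw_i$ of~\eqref{eq: adv-gain-fixed} (now with the \emph{algorithm's} distribution $q$ in place of $\newMIX$'s $p$) equals $w_1$ for \emph{every} $z$ once $q$ is uniform, while the algorithm collects only $w_1\bigl(1-(1-\tfrac1N)^N\bigr)$, exactly as in~\eqref{eq: alg-gain-w_1-only}--\eqref{eq: pbbs-prod}.

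I would then run the game in rounds. At the start of a round the algorithm's queue is (a relabelling of) the chain $C=\{h_1,\dots,h_N\}$; let $q=(q_1,\dots,q_N)$ be the algorithm's fixed response distribution over these packets (idling is allowed but only weakens the algorithm, so it need not be analysed separately). The adversary transmits a best response $h_z$, i.e.\ the index $z$ maximising $w_z+\sum_{i<z}q_iw_i$; then, having observed which $h_f$ the algorithm actually transmitted, it injects fresh packets, whose weights are taken from the fixed pool $\{w_1,\dots,w_N\}$, so as to restore a length‑$N$ chain in the algorithm's queue for the next round. Reacting to the realized $h_f$ is what the adaptive model buys us: it lets the adversary (i) always bring the algorithm's queue back to an $N$‑packet chain, so in particular it never exceeds $N$, and (ii) turn the light, long‑deadline packets $h_f$ with $f<z$ — which in the proof of Theorem~\ref{thm: main} the adversary was simply handed for free — into genuine transmissions spread over later rounds, so that over a long run it really collects $w_z+\sum_{i<z}q_iw_i$ per round, not merely $w_z$. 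Over $T$ rounds the algorithm's expected gain is $T\sum_i q_iw_i+O(1)$ and the adversary's is $T\max_z\bigl(w_z+\sum_{i<z}q_iw_i\bigr)+O(1)$; letting $T\to\infty$ yields a competitive ratio of at least
\[
	\frac{\displaystyle\max_{1\le z\le N}\Bigl(w_z+\sum_{i<z}q_iw_i\Bigr)}{\displaystyle\sum_{i=1}^N q_iw_i}\,,
\]
which is precisely the quantity~\eqref{eq: ratio} evaluated at the algorithm's own distribution.

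It then remains to check the elementary inequality that, with $w_i=\beta^{i-1}$ and $\beta=1-\tfrac1N$, the fraction above is at least $1/\bigl(1-(1-\tfrac1N)^N\bigr)$ for \emph{every} probability vector $q$ on $\{1,\dots,N\}$, with equality iff $q$ is uniform. I would prove this by normalising $\sum_i q_iw_i=1$, assuming for contradiction that $w_z+\sum_{i<z}q_iw_i\le\rho$ for all $z$ with some $\rho<1/(1-\beta^N)$, and deriving a contradiction with $\sum_i q_i=1$ via an Abel‑summation identity that expresses $\sum_i q_i$ through the partial sums $S_z=\sum_{i<z}q_iw_i$; this is exactly the LP dual of the AM--GM bound~\eqref{eq: pbbs-prod}, so the calculation also vindicates the heuristic of Section~\ref{sec:dist}.

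The step I expect to be the real obstacle is the re‑injection in the second paragraph: exhibiting a deadline structure together with a rule for the adversary's fresh injections so that, simultaneously, (a)~after every round the algorithm's queue is again a length‑$N$ dominance chain regardless of which $h_f$ was drawn, (b)~the algorithm's queue never exceeds $N$ and only the $N$ weights $w_1,\dots,w_N$ are ever used, and (c)~the adversary genuinely realizes the amortized gain $w_z+\sum_{i<z}q_iw_i$ per round. The subtlety is that the adversary can only inject packets — which enter \emph{both} buffers — and transmit from its own buffer, so the unilateral buffer surgery used in the proof of Theorem~\ref{thm: main} is not directly available and has to be simulated over time; this is precisely the part of the construction of~\cite{item-collecting} that was omitted there for lack of space.
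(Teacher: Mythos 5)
Your overall approach matches the paper's: same geometric weight progression (your $\beta=1-\tfrac1N$ is exactly $1/a$ for the paper's $a=1+\tfrac1{N-1}$, just indexed from the heavy end), same exploitation of memorylessness via a stationary $N$-packet chain, same per-round adversary gain $w_z+\sum_{i<z}q_iw_i$, and your ``Abel summation / LP dual'' route to the final inequality is morally the paper's convex-combination argument with the explicit coefficients $v_k$ (indeed those $v_k$ are obtained by exactly the ``equalize the deterministic cases and solve'' heuristic you gesture at). Both proofs then optimize the geometric ratio to get $1/\bigl(1-(1-\tfrac1N)^N\bigr)$.

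However, you explicitly flag the re-injection/deadline bookkeeping as ``the real obstacle'' and leave it unresolved, and that is where your write-up falls short of a proof. The paper's resolution is in fact simple precisely because the statement is about \emph{Collecting Items}, not buffer management: the algorithm observes only the relative order of deadlines, never their values, so the adversary may make the $k+1$ lightest items (your $h_{z},\dots,h_N$, the paper's $a^0,\dots,a^k$; note these are the ones with the \emph{earliest} deadlines in the non-dominated chain) expire at the end of every round and re-issue fresh copies, while giving the heavy items arbitrarily far-future deadlines. Whenever $\A$ transmits some $a^j$ with $j>k$ it leaves the algorithm's buffer but stays in the adversary's (the adversary transmitted $a^k$ instead), and a fresh copy of $a^j$ is issued so the algorithm again faces the same chain. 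After $T\gg N$ rounds both sides flush their queues, and the adversary's accumulated stock realizes exactly the amortized $a^k+\sum_{j>k}q_j a^j$ per round. Your worry that one must ``simulate the buffer surgery of Theorem~\ref{thm: main} over time'' is on target, but the simulation is not subtle here: it is a direct consequence of the adversary's freedom to set actual deadlines in Collecting Items. Also note that the adversary in the paper commits to a single strategy $k$ for the whole game rather than re-optimizing each round; since the queue seen by $\A$ is identical every round this is equivalent to your ``best response'' phrasing, but it makes the accounting of the adversary's accumulated buffer cleaner. Finally, the ``equality iff $q$ uniform'' you state is not needed for the lower bound and is not proved in the paper; only the one-sided inequality is used.
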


Below we present the original proof from~\cite{item-collecting}.

\begin{proof}
Fix some online memoryless randomized algorithm $\A$, and consider the following
scheme. Let $a~> 1$ be a~constant, which we specify later,
and let $n=N-1$ At the beginning, the adversary inserts items
$a^0,a^1,\ldots,a^n$ into the queue, in this order.
(To simplify notation, in this proof we identify
items with their weights.) In our construction we maintain the invariant that in each step,
the list of items pending for $\A$ is equal to $a^0,a^1,\ldots,a^n$.
Since $\A$ is memoryless, in each step it uses the same probability distribution
$(q_j)_{j=0}^n$, where $q_j$ is the probability of collecting item~$a^j$.
Moreover, $\sum_{i=0}^n q_i = 1$, as without loss of generality
the algorithm always makes a~move.

We consider $n+1$ strategies for an adversary, numbered $0,1,\ldots,n$.
The $k$-th strategy is as follows: in each step collect $a^k$,
delete items $a^0,a^1,\ldots,a^k$, and then issue new copies of these items.
Additionally, if $\A$ collected $a^j$ for some
$j>k$, then the adversary issues a~new copy of $a^j$ as well.
This way, in each step exactly one copy of each $a^j$ is
pending for $\A$, while the adversary accumulates in its pending set
copies of the items $a^j$, for $j>k$, that were collected by $\A$.

This step is repeated $T \gg n$ times, and after the last step
both the adversary and the algorithm collect all their pending items.
Since $T \gg n$, we only need to focus on the
expected amortized profits (defined below) in a~single step.

We look at the gains of $\A$ and the adversary in a~single step.
If the adversary chooses strategy~$k$, then it gains $a^k$. Additionally,
at the end it collects the item collected by the algorithm if this item is
greater than~$a^k$. Thus, its \emph{amortized expected
gain} in a~single step is $a^k + \sum_{i>k} q_i a^i$. The expected
gain of $\A$ is $\sum_i q_i a^i$.

For any probability distribution $(q_j)_{j=0}^n$ of the algorithm, the adversary chooses
a~strategy $k$ which maximizes the competitive ratio. Thus, the competitive
ratio of $\A$ is is at least
\begin{eqnarray}	
	R &=& \max_k \braced{ \frac{a^k + \sum_{j > k} q_j a^j}{\sum_j q_j a^j}} 
		 \;\geq\;  \sum_k v_k  \frac{a^k + \sum_{j > k} q_j a^j}{\sum_j q_j a^j} \enspace,
			\label{eqn: rand lower bound, ratio R}
\end{eqnarray}
for any coefficients $v_0,\ldots,v_n\ge 0$ such that $\sum_k v_k = 1$.
Let $M = a^{n+1} - n  (a-1)$. For $k=0,1,...,n$, we choose
\begin{eqnarray*}
	v_k &=& \begin{cases}
		\frac{1}{M}  a^{n-k}  ( a-1 ), 	& \textnormal{if $k < n$} 	\enspace , \\
		\frac{1}{M}  \left( a~- n  (a-1), \right) & \textnormal{if $k = n$} 	\enspace . \\
 	\end{cases}
\end{eqnarray*}
The choice of these values may seem somewhat mysterious, but it's in fact
quite simple{\emdash}it is obtained by considering $\A$'s distributions where
$q_j = 1$ for some $j$ (and thus when $\A$ is deterministic), assuming that
the resulting lower bounds on the right-hand side of
\eqref{eqn: rand lower bound, ratio R} are equal, and solving the resulting
system of equations.

For these values of $v_k$ we obtain
\begin{eqnarray*}
M R \sum_{j=0}^n q_j a^j 
	&\geq&  \sum_{k=0}^n M v_k a^k + \sum_{k=0}^n M v_k \sum_{j>k} q_j a^j \\
	 &=&  \sum_{k=0}^{n-1} M v_k a^k + M v_n a^n + \sum_{j=0}^n q_j a^j \sum_{k < j} M  v_k \\
	 &=& n(a-1)a^n + [a~- n(a-1)]a^n
				+ \sum_{j=0}^n q_j(a^j-1)a^{n+1} \\
	 &=& a^{n+1}
				+ a^{n+1} \sum_{j=0}^n q_ja^j - a^{n+1}\sum_{j=0}^n q_j \\
	 &=& a^{n+1}
				+ a^{n+1} \sum_{j=0}^n q_ja^j - a^{n+1}\\
	 &=&  a^{n+1}  \sum_{j=0}^n q_j  a^j \enspace.
\end{eqnarray*}
Therefore, $R \geq a^{n+1} / M$. This bound is maximized for $a~= 1+1/n$, in which case 
we get
\begin{eqnarray*}
R \;\geq\;
\frac{\left(1+\frac{1}{n}\right)^{n+1} }{ \left(1+\frac{1}{n}\right)^{n+1} - 1 } 
 = \frac{\left(1+\frac{1}{N-1}\right)^{N} }{ \left(1+\frac{1}{N-1}\right)^{N} - 1 }
 = \frac{1}{1-\left(1-\frac{1}{N}\right)^N}
\enspace ,
\end{eqnarray*}
completing the proof.
\end{proof}

\end{appendix}

%%%%%%%%%%%%%%%%%%%%%%%%%%%%%%%%%%%%%%%%%%%%%%%%%%%%%%%%%%%%%%%%%%%%%%%%%%%%%%%%

\end{document}